\newtheorem{theorem}{Theorem}[section]
\newtheorem{lemma}[theorem]{Lemma}
\def\footnoterule{\kern-3\p@
  \hrule \@width 2in \kern 2.6\p@} 
\begin{document}
%

\title{Optimal Error Correcting Code For\\ Ternary Quantum Systems}
\author[1,*]{Ritajit Majumdar}
\author[1]{Susmita Sur-Kolay}

\affil[1]{Advanced Computing \& Microelectronics Unit, Indian Statistical Institute, India}
\affil[*]{Email: ritajit.majumdar@ieee.org}

%



\maketitle
\thispagestyle{empty}
\begin{abstract}
Multi-valued quantum systems can store more information than binary ones for a given number of quantum states. For reliable operation of multi-valued quantum systems, error correction is mandated. In this paper, we propose a 5-qutrit quantum error correcting code and provide its stabilizer formulation. Since 5 qutrits are necessary to correct a single error, our proposed code is optimal in the number of qutrits. We prove that the error model considered in this paper spans the entire $(3 \times 3)$ operator space. Therefore, our proposed code can correct any single error on the codeword. This code outperforms the previous 9-qutrit code in (i) the number of qutrits required for encoding, (ii) our code can correct any arbitrary $(3 \times 3)$ error, (ii) our code can readily correct bit errors in a single step as opposed to the two-step correction used previously, and (iii) phase error correction does not require correcting individual subspaces.
\end{abstract}

\begin{IEEEkeywords}
Quantum Error Correction, Ternary Quantum Computing, Multivalued Logic.
\end{IEEEkeywords}

%
\IEEEpeerreviewmaketitle

\section{Introduction}
%
%
%
%
\IEEEPARstart{Q}{uantum} computers hold the promise of solving certain problems faster than their known classical counterparts \cite{Shor:1997:PAP:264393.264406,Grover:1996:FQM:237814.237866}. The functional unit of quantum computer, termed as qubit, is represented as a unit vector in Hilbert Space. 
Quantum systems are inherently multi-valued. Multi-valued quantum computing is gaining importance due to its ability to represent a larger search space using less resource \cite{gokhale2019asymptotic}. Increasing the search space makes cryptographic protocols more secure \cite{PhysRevLett.85.3313}. Furthermore, quantum random walk algorithms on graph often deal with higher dimensional coins \cite{Aharonov:2001:QWG:380752.380758, wong2015grover}. Ternary system is the simplest higher dimensional quantum system. An arbitrary ternary quantum state, or qutrit, has the form $\ket{\psi} = \alpha\ket{0} + \beta\ket{1} + \gamma\ket{2}$, $\alpha,\beta,\gamma \in \mathbb{C}$, $|\alpha|^2 + |\beta|^2 + |\gamma|^2 = 1$.

Evolution of a quantum state is governed by unitary operators, called quantum gates \cite{nielsen2002quantum}. The quantum system can interact with the environment and undergo some unwanted evolution ($E$), called \emph{error}. Error correction aims to undo this evolution. The first quantum error correcting code (QECC) encoded the information of a single qubit into nine qubits in order to correct a single error \cite{PhysRevA.52.R2493}. The 5-qubit QECC \cite{PhysRevLett.77.198} was shown to be optimal in the number of qubits. While the above mentioned codes are examples of concatenated code \cite{nielsen2002quantum}, a new family of code termed as surface code \cite{kitaev2003fault} has also been studied extensively, which solves the nearest-neighbour problem in Quantum Error Correction \cite{fowler2012surface,bravyi1998quantum,wang2003confinement}.

In contrast, there are fewer studies on multi-valued QECC \cite{PhysRevA.55.R839,959288,1715533}. Multi-valued QECC for erasure channel was proposed in \cite{muralidharan2017overcoming} and the channel capacity was achieved using Quantum Reed-Solomon Code \cite{PhysRevA.97.052316}. In \cite{8248788}, the authors studied error correction for multi-valued amplitude damping channel. However, for this paper, we concentrate particularly on error correction for \emph{ternary} quantum systems. A generalized error model was considered for ternary quantum systems in \cite{PhysRevA.97.052302}, and a single error was corrected using 9 qutrits. Nevertheless, this error model also could not correct arbitrary $(3\times 3)$ quantum errors. Furthermore, the 9-qutrit code required two cascading steps for correcting bit error, and phase error was corrected by correcting pairwise subspaces of the basis states. The authors did not propose the set of stabilizers for that code, and the two-step correction procedure increases the complexity, and hence the time requirement, for error correction.

In this paper, we propose a 5-qutrit quantum error correcting code. We provide the encoding for the proposed 5-qutrit QECC which is capable of correcting a single error on the codeword. We have proved that the error model considered in this paper encompasses all $(3 \times 3)$ unitary operator, and hence our proposed QECC can correct any arbitrary single error on a qutrit. We have also provided the stabilizer formulation for this QECC, which is similar to the 5-qubit code of Laflamme \cite{PhysRevLett.77.198}. Unlike the previous 9-qutrit code, our proposed set of stabilizers can readily correct both bit and phase errors in single steps. The provable lower bound on the number of qutrits is five, which makes our proposed code optimal.

The rest of the paper is organized as follows - in Section 2 we explore the error model, and show that it encompasses all possible $(3 \times 3)$ errors. In section 3 we provide the encoding scheme for the proposed QECC, and discuss its stabilizer formulation in Section 4. We conclude in Section 5.

\section{Error Model for Ternary Quantum Systems}
An error, being a quantum operator, is a unitary matrix \cite{nielsen2002quantum}. For ternary systems, any $(3 \times 3)$ unitary operator is a probable error. Consider a $(3 \times 3)$ matrix of the form
\begin{equation}
\label{eq:matrix}
    M = \begin{pmatrix}
    a & b & c\\
    d & e & f\\
    g & h & j
    \end{pmatrix}
\end{equation}

where, in general, each of the values $a, b, \hdots, j \in \mathbb{C}$. This matrix is not unitary for all values of $a, b, \hdots, j$ and hence does not represent a quantum error. Nevertheless, we stick with this matrix for the time being. Now, we consider the set of matrices $\sigma_i, i = 1, 2, \hdots 9$.

\begin{center}
	\begin{tabular}{c c}
		$\sigma_1 = \begin{pmatrix}
		1 & 0 & 0\\
		0 & 0 & 1\\
		0 & 1 & 0
		\end{pmatrix}$
		&
		$\sigma_2 = \begin{pmatrix}
		1 & 0 & 0\\
		0 & 0 & \omega^2\\
		0 & \omega & 0
		\end{pmatrix}$\\
		
		$\sigma_3 = \begin{pmatrix}
		1 & 0 & 0\\
		0 & 0 & \omega\\
		0 & \omega^2 & 0
		\end{pmatrix}$
		&
		$\sigma_4 = \begin{pmatrix}
		0 & 0 & 1\\
		0 & 1 & 0\\
		1 & 0 & 0
		\end{pmatrix}$\\

		$\sigma_5 = \begin{pmatrix}
		0 & 0 & \omega^2\\
		0 & 1 & 0\\
		\omega & 0 & 0
		\end{pmatrix}$
		&
		$\sigma_6 = \begin{pmatrix}
		0 & 0 & \omega\\
		0 & 1 & 0\\
		\omega^2 & 0 & 0
		\end{pmatrix}$\\
		
		$\sigma_7 = \begin{pmatrix}
		0 & 1 & 0\\
		1 & 0 & 0\\
		0 & 0 & 1
		\end{pmatrix}$
		&
		$\sigma_8 = \begin{pmatrix}
		0 & \omega^2 & 0\\
		\omega & 0 & 0\\
		0 & 0 & 1
		\end{pmatrix}$\\
		
		$\sigma_9 = \begin{pmatrix}
		0 & \omega & 0\\
		\omega^2 & 0 & 0\\
		0 & 0 & 1
		\end{pmatrix}$
	\end{tabular}
\end{center}

where $\omega$ is the cube-root of unity.

\begin{lemma}
\label{lemma1}
$\sigma_i$, $i = 1,2,\hdots,9$ are linearly independent.
\end{lemma}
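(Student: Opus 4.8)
The plan is to start from a hypothetical vanishing combination $\sum_{i=1}^{9} c_i \sigma_i = 0$ with $c_i \in \mathbb{C}$ and argue that every $c_i$ is forced to be zero. The structural fact I would exploit before touching any determinant is that the nine matrices split into three groups by \emph{which entries of the $3\times 3$ array they occupy}: $\sigma_1,\sigma_2,\sigma_3$ are supported on the positions $(1,1),(2,3),(3,2)$; $\sigma_4,\sigma_5,\sigma_6$ on $(1,3),(2,2),(3,1)$; and $\sigma_7,\sigma_8,\sigma_9$ on $(1,2),(2,1),(3,3)$. These three triples of positions are pairwise disjoint and together partition all nine entries of a $3\times 3$ matrix. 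Hence the single matrix equation $\sum c_i\sigma_i = 0$ decouples entrywise into three independent homogeneous systems, one per group, and the whole combination vanishes if and only if each group's sub-combination vanishes separately. This reduces the question from a $9\times 9$ problem to three $3\times 3$ problems of identical shape.

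Next I would read off the coefficient system for a single group, say $\sigma_1,\sigma_2,\sigma_3$, by evaluating the combination at its three occupied positions $(1,1),(2,3),(3,2)$. Using that $\omega$ is a primitive cube root of unity, this gives the linear system with coefficient matrix $\begin{pmatrix} 1 & 1 & 1 \\ 1 & \omega^2 & \omega \\ 1 & \omega & \omega^2 \end{pmatrix}$ acting on $(c_1,c_2,c_3)^{\mathsf{T}}$. This is precisely the order-$3$ discrete Fourier (Vandermonde) matrix in the nodes $1,\omega,\omega^2$, whose determinant is the Vandermonde product $\prod_{r<s}(\text{node}_s-\text{node}_r)$ and is nonzero because $1,\omega,\omega^2$ are \emph{distinct}. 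The systems for the groups $\{\sigma_4,\sigma_5,\sigma_6\}$ and $\{\sigma_7,\sigma_8,\sigma_9\}$ have the very same Fourier coefficient matrix (only the occupied positions differ), so each is nonsingular by the identical argument. Nonsingularity forces $c_1=c_2=c_3=0$, and likewise $c_4=\cdots=c_9=0$.

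I do not expect a genuine obstacle here; the content is entirely routine once the right bookkeeping is in place. The only place where things could become needlessly messy is attempting the claim head-on as one $9\times 9$ determinant. The \textbf{key simplification}, and the step I would foreground, is the disjoint-support partition of the entries, which collapses the problem into three copies of the same nonsingular Fourier matrix and lets distinctness of the cube roots of unity do all the remaining work.
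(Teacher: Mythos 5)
Your proof is correct, and it shares the paper's overall skeleton: the paper likewise derives the nine entrywise equations from $\sum_i \Lambda_i\sigma_i = 0$ and observes that they decouple into three sets of three equations with pairwise disjoint coefficients, exactly your disjoint-support partition. Where you genuinely diverge is in how each $3\times 3$ block is shown to admit only the trivial solution. The paper argues by contradiction inside a block: it first rules out the case of exactly one nonzero coefficient, then assumes (without loss of generality) $\Lambda_1 = -(\Lambda_2 + \Lambda_3) \neq 0$, substitutes into the remaining two equations to obtain two expressions for the ratio $\Lambda_2/\Lambda_3$, and equates them to reach the absurdity $\omega = \omega^2$. You instead recognize the block's coefficient matrix as the order-$3$ Vandermonde (DFT) matrix at the distinct nodes $1,\omega,\omega^2$ and conclude nonsingularity in one stroke. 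Your route buys three things: it is shorter; it avoids the paper's case analysis, which is slightly delicate (the ``without loss of generality'' does not obviously cover the case $\Lambda_1 = 0$ with $\Lambda_2,\Lambda_3$ both nonzero, and forming the ratio presupposes $\Lambda_3 \neq 0$); and it unifies Lemma 1 with Lemma 2, where the paper itself switches to precisely your argument by asserting that the determinant of the left-hand side of each block is nonzero. What the paper's route buys in exchange is elementarity: it needs nothing beyond substitution and the fact that $\omega \neq \omega^2$, with no appeal to determinants or the Vandermonde product.
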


\begin{proof}
Let us assume there exists $\Lambda_i, i = 1,2,\hdots,9$ such that $\sum \limits_{i=1}^{9}\Lambda_i\sigma_i = 0$, and $\Lambda_i \neq 0$ $\forall i$. We have the set of equations:
\begin{eqnarray*}
\Lambda_1 + \Lambda_2 + \Lambda_3 &=& 0\\
\Lambda_1 + \omega^2\Lambda_2 + \omega\Lambda_3 &=& 0\\
\Lambda_1 + \omega\Lambda_2 + \omega^2\Lambda_3 &=& 0\\
\Lambda_4 + \Lambda_5 + \Lambda_6 &=& 0\\
\Lambda_4 + \omega^2\Lambda_5 + \omega\Lambda_6 &=& 0\\
\Lambda_4 + \omega\Lambda_5 + \omega^2\Lambda_6 &=& 0\\
\Lambda_7 + \Lambda_8 + \Lambda_9 &=& 0\\
\Lambda_7 + \omega^2\Lambda_8 + \omega\Lambda_9 &=& 0\\
\Lambda_7 + \omega\Lambda_8 + \omega^2\Lambda_9 &=& 0
\end{eqnarray*}

Note that these nine equations can be grouped into three sets, each set containing three equations. No two sets of equations involve the same coefficients. In the above nine equations, the first three, second three, and the last three equations form such sets. We show the proof for the set of first three equations involving coefficients $\Lambda_1,\Lambda_2,\Lambda_3$. The proof for the other two sets will be identical.

If the set of matrices $\sigma_i$, $i = 1,2,\hdots,9$ are not linearly independent, at least two of the three coefficients $\Lambda_1,\Lambda_2,\Lambda_3$ must be non-zero. If only one of them is non-zero, then it does not satisfy the first equation, and hence such a scenario is ruled out. Without loss of generality, let us assume that $\Lambda_1 = -(\Lambda_2 + \Lambda_3) \neq 0$. Substituting $\Lambda_1$ in the second and third equations respectively yields
\begin{eqnarray*}
(\omega^2 - 1)\Lambda_2 &=& (1-\omega)\Lambda_3 \Rightarrow \frac{\Lambda_2}{\Lambda_3} = \frac{1-\omega}{\omega^2-1}\\
(\omega - 1)\Lambda_2 &=& (1-\omega^2)\Lambda_3 \Rightarrow \frac{\Lambda_2}{\Lambda_3} = \frac{1-\omega^2}{\omega-1}
\end{eqnarray*}
Equating the ratios of $\Lambda_2$ and $\Lambda_3$ gives $\omega = \omega^2$, which is not possible. Therefore, to satisfy the first set of three equations, each of the coefficients must be zero.

Extending the similar argument to the other two sets dictates that if $\sum \limits_{i=1}^{9}\Lambda_i\sigma_i = 0$, then $\Lambda_i = 0$ $\forall i$.
\end{proof}

\begin{lemma}
\label{lemma2}
For any $(3 \times 3)$ matrix $M$, $\exists \lambda_i, i = 1,2,\hdots,9$ such that $\sum\limits_{i=1}^{9}\lambda_i\sigma_i = M$.
\end{lemma}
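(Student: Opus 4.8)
The plan is to deduce the statement directly from Lemma~\ref{lemma1} by a dimension count. The set of all $(3\times 3)$ complex matrices is a vector space over $\mathbb{C}$ of dimension $9$, a natural basis being the nine matrices $E_{kl}$ with a single $1$ in position $(k,l)$ and zeros elsewhere. Lemma~\ref{lemma1} has already shown that $\sigma_1,\dots,\sigma_9$ are nine linearly independent elements of this space. Since any collection of $n$ linearly independent vectors in an $n$-dimensional vector space is automatically a basis, the $\sigma_i$ must span the whole space, so every $M$ admits an expansion $M=\sum_{i=1}^{9}\lambda_i\sigma_i$, which is exactly the claim. In this abstract form the proof needs nothing beyond Lemma~\ref{lemma1} and is essentially immediate.

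To exhibit the coefficients explicitly—and to expose the block structure that is convenient for the error-correction construction—I would instead solve the system $\sum_i\lambda_i\sigma_i=M$ entrywise. Writing $M$ as in \eqref{eq:matrix} and comparing entries, one observes that the nine matrix positions split into three disjoint triples that are populated by disjoint groups of the $\sigma_i$: the matrices $\sigma_1,\sigma_2,\sigma_3$ touch only positions $(1,1),(2,3),(3,2)$; the matrices $\sigma_4,\sigma_5,\sigma_6$ only $(1,3),(2,2),(3,1)$; and $\sigma_7,\sigma_8,\sigma_9$ only $(1,2),(2,1),(3,3)$. Consequently the $9\times 9$ linear system decouples into three independent $3\times 3$ systems, one per triple of coefficients, and each can be solved separately.

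For the first triple the equations read $\lambda_1+\lambda_2+\lambda_3=a$, $\lambda_1+\omega^2\lambda_2+\omega\lambda_3=f$, and $\lambda_1+\omega\lambda_2+\omega^2\lambda_3=h$; the coefficient matrix is a Vandermonde matrix with nodes $1,\omega^2,\omega$, equivalently the $3$-point discrete Fourier transform matrix up to a row permutation. Since $1,\omega,\omega^2$ are distinct, this determinant is nonzero, so the triple $(\lambda_1,\lambda_2,\lambda_3)$ is uniquely recovered from $(a,f,h)$, and the remaining two triples give identical systems with the same invertible coefficient matrix. The only point requiring care—the ``obstacle,'' such as it is—is confirming the block decoupling and the nonsingularity of the shared coefficient matrix; both follow at once from writing out the entrywise equations, and the latter is precisely the nonsingularity already exploited in the proof of Lemma~\ref{lemma1}.
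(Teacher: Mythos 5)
Your proposal is correct, and your primary argument takes a genuinely different (and shorter) route than the paper. The paper proves Lemma~\ref{lemma2} from scratch: it writes $M=\sum_{i=1}^{9}\lambda_i\sigma_i$ entrywise, observes that the nine scalar equations decouple into three groups of three with disjoint unknowns, and asserts that each group's coefficient matrix has nonzero determinant, hence each group is solvable. Your first argument instead deduces the lemma directly from Lemma~\ref{lemma1} by a dimension count: nine linearly independent elements of the $9$-dimensional space of $(3\times 3)$ complex matrices automatically form a basis and therefore span, so the expansion exists for every $M$. This is cleaner, needs no computation, and makes transparent that Lemmas~\ref{lemma1} and~\ref{lemma2} together say precisely that the $\sigma_i$ form a basis of the matrix space; what it does not provide is the explicit coefficients. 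Your second, explicit argument supplies those, and it coincides with the paper's proof --- same block decoupling, same three $3\times 3$ systems --- while adding a refinement the paper leaves implicit: identifying the shared coefficient matrix as a Vandermonde (discrete Fourier) matrix with distinct nodes $1,\omega,\omega^2$, which actually justifies the paper's bare assertion that ``the determinant of the L.H.S.\ \ldots is always non-zero.'' In short, your writeup both subsumes the paper's argument and strengthens its one unproved step, while the dimension-count route shows the lemma is an immediate corollary of Lemma~\ref{lemma1} rather than an independent calculation.
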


\begin{proof}
Consider the matrix $M$ as in Eq.~\ref{eq:matrix}. Putting $M = \sum\limits_{i=1}^{9}\lambda_i\sigma_i$ yields the following 9 equations
\begin{eqnarray*}
\lambda_1 + \lambda_2 + \lambda_3 &=& a\\
\lambda_1 + \omega^2\lambda_2 + \omega\lambda_3 &=& f\\
\lambda_1 + \omega\lambda_2 + \omega^2\lambda_3 &=& h\\
\lambda_4 + \lambda_5 + \lambda_6 &=& e\\
\lambda_4 + \omega^2\lambda_5 + \omega\lambda_6 &=& c\\
\lambda_4 + \omega\lambda_5 + \omega^2\lambda_6 &=& g\\
\lambda_7 + \lambda_8 + \lambda_9 &=& i\\
\lambda_7 + \omega^2\lambda_8 + \omega\lambda_9 &=& b\\
\lambda_7 + \omega\lambda_8 + \omega^2\lambda_9 &=& d
\end{eqnarray*}
The determinant of the L.H.S. of the first three equations is always non-zero, which implies that it is always possible to find $\lambda_1, \lambda_2, \lambda_3$ which satisfy the set of three equations. Since each set of three equations has disjoint set of coefficients, similar arguments hold for the other two sets also.

Therefore, for any such matrix $M$, it is always possible to find linearly independent parameters $\lambda_i, i = 1,2,\hdots,9$ such that $\sum\limits_{i=1}^{9}\lambda_i\sigma_i = M$.
\end{proof}

\begin{theorem}
A QECC which can correct the matrices $\sigma_i, i=1,2,\hdots,9$ can correct any error on a qutrit.
\end{theorem}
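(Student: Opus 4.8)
The plan is to show that correcting the nine matrices $\sigma_i$ suffices to correct an \emph{arbitrary} error by combining the two preceding lemmas with the linearity (error-discretization) principle of quantum error correction \cite{nielsen2002quantum}. Lemma~\ref{lemma1} establishes that the $\sigma_i$ are linearly independent, and Lemma~\ref{lemma2} that they span the space of all $(3\times3)$ matrices; since this space has complex dimension nine, the two results together confirm that $\{\sigma_i\}_{i=1}^{9}$ is a basis for the full $(3\times3)$ operator space.

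First I would observe that any physical error on a single qutrit is a $(3\times3)$ unitary operator $E$, and therefore in particular a $(3\times3)$ matrix. By Lemma~\ref{lemma2} there exist coefficients $\lambda_i$ with $E = \sum\limits_{i=1}^{9}\lambda_i\sigma_i$. Acting on an encoded codeword $\ket{\psi_L}$ then gives
\begin{equation*}
E\ket{\psi_L} = \sum\limits_{i=1}^{9}\lambda_i\,\sigma_i\ket{\psi_L},
\end{equation*}
so the corrupted state is a superposition over the branches produced by the individually correctable errors $\sigma_i$.

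Next I would invoke the syndrome-measurement step. By hypothesis the code corrects each $\sigma_i$, which means the $\sigma_i$ send the code space to mutually distinguishable subspaces; measuring the stabilizers therefore projects the superposition onto a single branch $\sigma_i\ket{\psi_L}$ (up to normalization) and returns the corresponding syndrome. The recovery operation then applies $\sigma_i^{\dagger}$ to restore $\ket{\psi_L}$. Because this argument holds for whichever branch the measurement selects, the arbitrary error $E$ is corrected, and since $E$ was an arbitrary qutrit error the theorem follows.

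The main obstacle is justifying the passage from ``corrects each $\sigma_i$'' to ``corrects every linear combination of the $\sigma_i$.'' This is precisely the discretization of errors in QEC: it rests on the linearity of quantum evolution together with the fact that the syndrome measurement is projective and collapses the continuum of possible errors onto the finite, discrete set $\{\sigma_i\}$ for which the code is designed. The technical content one must verify is the Knill--Laflamme condition --- that the $\sigma_i$ carry the code space to orthogonal error subspaces so that the measurement cleanly separates the branches --- which is exactly what the explicit encoding and stabilizer construction of the following sections is meant to supply.
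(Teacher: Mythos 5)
Your proposal is correct and takes essentially the same route as the paper: both express an arbitrary unitary error as a linear combination of the $\sigma_i$ via Lemma~\ref{lemma2} (noting that unitaries are a subset of all $(3\times3)$ matrices) and then invoke the linearity/discretization principle of quantum error correction to conclude. Your write-up additionally unpacks the syndrome-measurement collapse and recovery argument that the paper compresses into the single assertion that correcting each $\sigma_i$ suffices to correct any $\mathcal{E}$.
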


\begin{proof}
The proof follows directly from Lemma~\ref{lemma1} and ~\ref{lemma2}. If $\{M\}$ is the set of all $(3 \times 3)$ matrices, then the set of all possible quantum errors $\{E\}$, such that every $\mathcal{E} \in \{E\}$ is a unitary matrix, is $\{E\} \subset \{M\}$. Therefore, any $\mathcal{E} \in \{E\}$ can also be written as a linear combination of $\sigma_i, i=1,2,\hdots,9$. If a QECC can correct each of $\sigma_i$, it can also correct any error $\mathcal{E}$ on the quantum system.
\end{proof}

If we consider the matrices $\sigma_i, i = 1,2,\hdots,9$ as errors, then it is easy to see that each of those matrices keep one of the basis states unchanged, and affects the other two bases. For example, if the quantum state is $\ket{\psi} = \alpha\ket{0} + \beta\ket{1} + \gamma\ket{2}$, then $\sigma_1\ket{\psi} = \alpha\ket{0} + \beta\ket{2} + \gamma\ket{1}$. The basis state $\ket{0}$ remains unaffected while the other two bases are flipped. Such errors prove to be difficult to correct in our framework. Hence, we aim for some further tuning. We consider the matrices $X_1$, $X_2$, $Z_1$ and $Z_2$ as errors and show that all the $\sigma_i, i = 1,2,\hdots,9$ matrices can be written as the linear combination of these matrices and their products. Therefore, any $(3 \times 3)$ quantum error $\mathcal{E}$ can be written as a linear combination of $X_1$, $X_2$, $Z_1$ and $Z_2$ and their products.

\begin{center}
	\begin{tabular}{c  c}
		$X_1 = \begin{pmatrix}
		0 & 0 & 1\\
		1 & 0 & 0\\
		0 & 1 & 0
		\end{pmatrix}$
		&
		$X_2 = \begin{pmatrix}
		0 & 1 & 0\\
		0 & 0 & 1\\
		1 & 0 & 0
		\end{pmatrix}$
	\end{tabular}
\end{center}

\begin{center}
	\begin{tabular}{ c  c}
		$Z_1 = \begin{pmatrix}
		1 & 0 & 0\\
		0 & \omega & 0\\
		0 & 0 & \omega^2
		\end{pmatrix}$
		&
		$Z_2 = \begin{pmatrix}
		1 & 0 & 0\\
		0 & \omega^2 & 0\\
		0 & 0 & \omega
		\end{pmatrix}$
	\end{tabular}
\end{center}

We explicitly show the formulation (upto a scalar coefficient) of $\sigma_1$ and $\sigma_2$ matrices using $X_1$, $X_2$, $Z_1$ and $Z_2$. The other matrices can also be formulated similarly.
\begin{eqnarray*}
\begin{pmatrix}
1 & 0 & 0\\
0 & 0 & 1\\
0 & 1 & 0
\end{pmatrix} &=& \begin{pmatrix}
1 & 0 & 0\\
0 & 1 & 0\\
0 & 0 & 1
\end{pmatrix} + \begin{pmatrix}
1 & 0 & 0\\
0 & \omega & 0\\
0 & 0 & \omega^2
\end{pmatrix}\\
&+& \begin{pmatrix}
1 & 0 & 0\\
0 & \omega^2 & 0\\
0 & 0 & \omega
\end{pmatrix} + \begin{pmatrix}
0 & 1 & 0\\
0 & 0 & 1\\
1 & 0 & 0
\end{pmatrix}\\
&+& \begin{pmatrix}
0 & \omega & 0\\
0 & 0 & 1\\
\omega^2 & 0 & 0
\end{pmatrix} + \begin{pmatrix}
0 & \omega^2 & 0\\
0 & 0 & 1\\
\omega & 0 & 0
\end{pmatrix}\\
&+& \begin{pmatrix}
0 & 0 & 1\\
1 & 0 & 0\\
0 & 1 & 0
\end{pmatrix} + \begin{pmatrix}
0 & 0 & \omega^2\\
\omega & 0 & 0\\
0 & 1 & 0
\end{pmatrix}\\
&+& \begin{pmatrix}
0 & 0 & \omega\\
\omega^2 & 0 & 0\\
0 & 1 & 0
\end{pmatrix}\\
&=& I + Z_1 + Z_2 + X_2 + \omega Z_2X_2 + \omega^2 Z_1X_2\\
& & + X_1 + \omega^2 Z_2X_1 + \omega Z_1X_1
\end{eqnarray*}
\begin{eqnarray*}
\begin{pmatrix}
1 & 0 & 0\\
0 & 0 & \omega^2\\
0 & \omega & 0
\end{pmatrix} &=& \begin{pmatrix}
1 & 0 & 0\\
0 & \omega^2 & 0\\
0 & 0 & \omega
\end{pmatrix}\begin{pmatrix}
1 & 0 & 0\\
0 & 0 & 1\\
0 & 1 & 0
\end{pmatrix}\\
&=& I + Z_1 + Z_2 + Z_2X_2 + \omega Z_1X_2 + \omega^2 X_2\\
& & + Z_2X_1 + \omega^2 Z_1X_1 +\omega X_1
\end{eqnarray*}

In accordance with \cite{PhysRevA.97.052302}, we call the errors $X_1$ and $X_2$ as ``bit shift errors", and the errors $Z_1$ and $Z_2$ as ``phase errors". The action of these errors on a general qutrit $\ket{\psi} = \alpha\ket{0} + \beta\ket{1} + \gamma\ket{2}$ are shown below:
\begin{eqnarray*}
    Z_i\ket{\psi} &=& \alpha\ket{0} + \omega^i\beta\ket{1} + \omega^{2i}\gamma\ket{2}\\
    X_i\ket{\psi} &=& \alpha\ket{0+i} + \beta\ket{1+i} + \gamma\ket{2+i}
\end{eqnarray*}
where $i \in \{1,2\}$ and the addition is modulo 3. Therefore, the error model to correct any $(3 \times 3)$ quantum error can be summarized as in Eq.~\ref{eq:model}.
\begin{equation} \label{eq:model}
\mathcal{E} = \delta\mathbb{I}_3 + \sum_{i = 1}^2\eta_iZ_i + \sum_{j=1}^{2}(\mu_jX_j + \sum_{i,j}\xi_{ij}Y_{ij})
\end{equation}
where $\mathbb{I}_3$ is the $(3 \times 3)$ identity matrix, $X_i$ and $Z_i$ are the bit and phase errors respectively, and $Y_{ij} = X_iZ_j$; $\delta,\eta,\mu,\xi \in \mathbb{C}$.

\section{Ternary Quantum Error Correction}

\subsection{5-Qutrit Error Correcting Code}
The quantum information in a single qutrit $\ket{\psi} = \alpha\ket{0} + \beta\ket{1} + \gamma\ket{2}$ is distributed into five qutrits in order to protect the information from a single error. The encoded logical qutrit is $\ket{\psi_L} = \alpha\ket{0_L} + \beta\ket{1_L} + \gamma\ket{2_L}$ where

\begin{eqnarray*}
\ket{0_L} &=& \ket{00000} + \ket{01001} + \ket{02002} + \ket{10100}\\
&+& \omega\ket{11101} + \omega^2\ket{12102} + \ket{20200} + \omega^2\ket{22211}\\
&+& \omega\ket{22202} + \ket{01010} + \omega\ket{02011} + \omega^2\ket{00012}\\
&+& \omega\ket{11110} + \ket{12111} + \omega\ket{10112} + \omega^2\ket{21210}\\
&+& \omega^2\ket{22211} + \omega^2\ket{20212} + \ket{02020} + \omega^2\ket{00021}\\
&+& \omega\ket{01022} + \omega^2\ket{12120} + \omega^2\ket{10121} + \omega^2\ket{11122}\\
&+& \omega\ket{22220} + \omega^2\ket{20221} + \ket{21222} + \ket{00101}\\
&+& \omega\ket{01102} + \omega^2\ket{02100} + \omega\ket{10201} + \ket{11202}\\
&+& \omega^2\ket{12200} + \omega^2\ket{20001} + \omega^2\ket{21002} + \omega^2\ket{22000}\\
&+& \ket{00202} + \omega^2\ket{01200} + \omega\ket{02201} + \omega^2\ket{10002}\\
&+& \omega^2\ket{11000} + \omega^2\ket{12001} + \omega\ket{20102} + \omega^2\ket{21100}\\
&+& \ket{22101} + \omega\ket{01111} + \ket{02112} + \omega^2\ket{00110}\\
&+& \ket{11211} + \ket{12212} + \ket{10210} + \omega^2\ket{21011}\\
&+& \ket{22012} + \omega\ket{20010} + \omega^2\ket{01212} + \omega^2\ket{02210}\\
&+& \omega^2\ket{00211} + \omega^2\ket{11012} + \ket{12010} + \omega\ket{10011}\\
&+& \omega^2\ket{21112} + \omega\ket{22110} + \ket{20111} + \omega^2\ket{02121}\\
&+& \omega^2\ket{00122} + \omega^2\ket{01120} + \omega^2\ket{12221} + \ket{10222}\\
&+& \omega\ket{11220} + \omega^2\ket{22021} + \omega\ket{20022} + \ket{21020}\\
&+& \omega\ket{02222} + \omega^2\ket{00220} + \ket{01221} + \omega^2\ket{12022}\\
&+& \omega\ket{10020} + \ket{11021} + \ket{22122} + \ket{20120}\\
&+& \ket{21121}
\end{eqnarray*}

The other two logical states are reported in Appendix~\ref{logical}. For error correction, the logical states should conform to the Knill-Laflamme condition \cite{PhysRevLett.84.2525} which states that for any error $E$, $\bra{0_L}E\ket{0_L} = \bra{1_L}E\ket{1_L} = \bra{2_L}E\ket{2_L}$. One can check easily that the condition is satisfied for this encoding.

\subsection{Optimality Of 5-Qutrit Code}
There are two bit errors ($X_i$), two phase errors ($Z_j$), and therefore four possible $Y_{ij} = X_i Z_j$ errors in the error model. Therefore, there are eight possible error states for each of the physical qutrits and one error free state. In an $n$-qutrit code, $(8n+1)$ error states are possible for each of the $\ket{0}_L$, $\ket{1}_L$ and $\ket{2}_L$ states, leading to $3(8n+1)$ possible error states. For successful error correction, it is necessary that these possible error states reside in orthogonal subspaces of the Hilbert Space associated with the logical qutrit. Since an $n$-qutrit state is associated with a $3^n$-dimensional Hilbert Space, the necessary condition is
\begin{center}
    $3(8n+1) \leq 3^n$
\end{center}
The above inequality is satisfied for $n \geq 5$. Therefore, a single qutrit of information must be distributed into at least $5$ qutrits in order to correct a single error. Our proposed QECC requires 5 qutrits for encoding, and is, therefore, optimal in the number of qutrits. This bound, also known as Quantum Hamming Bound, is derivable from \cite{PhysRevA.55.R839}.

\subsection{Performance Analysis Of The Code}
Our proposed QECC can correct a single error only and fails if more than one error occur on the encoded qutrit. As the set of stabilizers (shown in next section) indicates, this code is not a CSS code. Hence, this code is unable to correct a single bit and phase errors together even when they occur on different qutrits \cite{devitt2013quantum}. If $p$ is the probability that a single qutrit is erroneous, then the probability that the code fails is
\begin{eqnarray*}
    & & 1 - (1-p)^5 - \begin{pmatrix}
5\\
1
\end{pmatrix} p(1-p)^4\\
& = & 1 - (1+4p)(1-p)^4\\
& = & 10p^2 + \mathcal{O}(p^3)
\end{eqnarray*}
If no error correcting code is applied, then the probability of error on the logical qutrit $p$ is equal to the probability of error of the physical qutrit. However, if the physical qutrit is encoded using our proposed QECC, the probability of error on the logical qutrit is reduced to $10p^2 + \mathcal{O}(p^3)$. Since $p \simeq 10p^2 + \mathcal{O}(p^3)$ for $p = \frac{1}{10}$, for $p < \frac{1}{10}$, this technique provides an improved method for preserving the coherence of the qutrits.

\section{Stabilizer Formulation}
\subsection{Error Correction Via Stabilizers}
A set of operators $S_1, S_2, \hdots S_m \in \{I, \sigma_x, \sigma_z\}^{\otimes n}$ is said to stabilize a quantum state $\ket{\psi}$ if the following criteria are satisfied \cite{gottesman1997stabilizer}

\begin{enumerate}
    \item $S_i\ket{\psi} = \ket{\psi}$ $\forall$ $i$.
    \item For all errors $E$, $\exists$ $j$ such that $S_j(E\ket{\psi}) = -(E\ket{\psi})$. The -1 phase is for binary quantum systems. For ternary systems, this condition will be updated as $S_j(E\ket{\psi}) = \omega(E\ket{\psi})$ or $S_j(E\ket{\psi}) = \omega^2(E\ket{\psi})$.
    \item For different errors $E$ and $E'$, $\exists$ $j, k$ such that $S_j(E\ket{\psi}) \neq S_k(E'\ket{\psi})$.
    \item $\forall$ $i,j$, $[S_i, S_j] = 0$.
\end{enumerate}

Furthermore, an $n$-qudit state with $m$ stabilizers can encode $k = n-m$ logical qudits. Therefore, the task of error correction becomes equivalent to finding a set of stabilizers for the encoded quantum state.

The stabilizers for Shor \cite{PhysRevA.52.R2493} and Steane code \cite{PhysRevLett.77.793}, called CSS codes, can be partitioned into two disjoint sets of operators, where one set of operators ($\{S_x\}$) consists only of $I$ and $\sigma_x$, and the other set ($\{S_z\}$) only of $I$ and $\sigma_z$. The corresponding circuit of such codes are easy to implement \cite{devitt2013quantum}. Furthermore, these codes can also correct a single $\sigma_x$ and $\sigma_z$ errors if they occur on different qubits. The circuit realization of non-CSS codes (e.g. \cite{PhysRevLett.77.198}) is resource-extensive \cite{devitt2013quantum,majumdar2017method}, and such codes cannot correct two errors in any scenario \cite{devitt2013quantum}.

However, for a 5-qutrit code, it is not possible to have a CSS type stabilizer structure. From the error model, each qutrit can incur two types of bit (phase) errors. Furthermore, only a single error is assumed to have occurred on the codeword. This accounts for $2 \times 5 = 10$ possible single bit (phase) error combinations. Each of the stabilizers can have one of the three possible outcomes: $1, \omega, \omega^2$. Hence at least 3 stabilizers are required to detect 10 bit (phase) error combinations and the error free state. Therefore, in order to obtain a set of operators $S = \{S_x,S_z\}$, $|S| \geq 6$. However, from the equation $k = n - m$, it is evident that for a 5-qutrit QECC, there are four stabilizers which encode a single logical qutrit. Therefore, the 5-qutrit code is a non-CSS code.


\subsection{Stabilizer Structure For The 5-Qutrit QECC}
The general stabilizer structure for higher dimensional quantum systems, as proposed by Gottesman \cite{gottesman1998fault}, is

\begin{center}
    $X\ket{j} = \ket{j+1}$ mod $d$ \hspace*{0.7cm} $Z\ket{j} = \omega^j\ket{j}$
\end{center}

where $d$ is the dimension of the quantum system.

The proposed stabilizers for correcting bit errors in the codeword are as follows:
\begin{eqnarray*}
S_1 & = & I \otimes X \otimes Z \otimes Z \otimes X\\
S_2 & = & X \otimes I \otimes X \otimes Z \otimes Z\\
S_3 & = & Z \otimes X \otimes I \otimes X \otimes Z\\
S_4 & = & Z \otimes Z \otimes X \otimes I \otimes X
\end{eqnarray*}

These set of stabilizers is equivalent to the 5-qubit code \cite{PhysRevLett.77.198}. However, one can come up with a different set of four stabilizers which leads to different codewords. In Table~\ref{tab:stab} we show the action of the stabilizers on the errors $X_1,X_2,Z_1,Z_2$ when the error occurs on the first qutrit only. The eigenvalues corresponding to the stabilizers for errors occurring on other qutrits can be obtained similarly.

\begin{table}[htb]
    \centering
    \caption{Error Correction with Stabilizers for Errors on the First Qutrit}
    \begin{tabular}{|c|c|c|c|c|}
    \hline
        Error State & $S_1$ & $S_2$ & $S_3$ & $S_4$ \\
        \hline
        $\ket{\psi}$ & +1 & +1 & +1 & +1\\
        \hline
        $X_1\ket{\psi}$ & +1 & +1 & $\omega$ & $\omega$\\
        $X_2\ket{\psi}$ & +1 & +1 & $\omega^2$ & $\omega^2$\\
        $Z_1\ket{\psi}$ & +1 & $\omega^2$ & +1 & +1\\
        $Z_2\ket{\psi}$ & +1 & $\omega$ & +1 & +1\\
        \hline
    \end{tabular}
    \label{tab:stab}
\end{table}

A major shortcoming of the 9-qutrit code \cite{PhysRevA.97.052302} is that it fails to provide a complete set of stabilizers for error correction. Therefore, the bit error correction is two step - first to detect the presence of error, and then to determine its location. Similarly, for phase errors, individual subspaces were corrected. Since there are three subspaces ($\{\ket{0},\ket{1}\}$,$\{\ket{1},\ket{2}\}$,$\{\ket{2},\ket{0}\}$), this is a three step process. Our proposed QECC overcomes these shortcomings. Operating the proposed four stabilizers on the codeword can correct a single error. Therefore, two (three) steps are not required for the correction of a single bit (phase) error.

The circuit realization of Laflamme's 5-qubit code is not trivial. Furthermore, the gates used in the circuit of that code are not mostly implementable in modern day technology. It was shown in \cite{majumdar2017method} that the quantum cost of that circuit is significantly higher than that of Shor and Steane code. We invite the readers to come up with a circuit for this 5-qutrit code. To the best of our knowledge, no universal gate set is available for ternary quantum systems. However, the MS gates \cite{PhysRevA.62.052309} are shown to be implementable in Ion-Trap technology. It may be worthwhile to try to realize the circuit for this proposed code using MS gates and Chrestenson gates \cite{Hurst1985-HURSTI-2}.

\section{Conclusion}
In this paper we have proposed a 5-qutrit error correcting code which can correct any arbitrary $(3 \times 3)$ error on a qutrit. The error model considered in this paper overcomes the shortcomings of specialized channels considered in the previous literature for ternary QECC. Five qutrits are necessary for correcting a single error in a qutrit. Hence, our code is optimal in the number of qutrits. The stabilizer structure of this code is similar to the 5-qubit QECC. Furthermore, our proposed code can correct both bit and phase error in single steps, as compared to multi-step corrections in the previous 9-qutrit code. The future scope is to come up with the circuit for this code, and also to search for the minimum qutrit CSS code.

\section*{Acknowledgement}
Ritajit Majumdar would like to acknowledge Prof. Debasis Sarkar, University of Calcutta, for helpful discussions on the error model.


%

\appendix

\section{Encoding scheme for the qutrits}
\label{logical}

The encoded logical qutrit is $\ket{\psi_L} = \alpha\ket{0_L} + \beta\ket{1_L} + \gamma\ket{2_L}$ where

\begin{eqnarray*}
\ket{0_L} &=& \ket{00000} + \ket{01001} + \ket{02002} + \ket{10100}\\
&+& \omega\ket{11101} + \omega^2\ket{12102} + \ket{20200} + \omega^2\ket{22211}\\
&+& \omega\ket{22202} + \ket{01010} + \omega\ket{02011} + \omega^2\ket{00012}\\
&+& \omega\ket{11110} + \ket{12111} + \omega\ket{10112} + \omega^2\ket{21210}\\
&+& \omega^2\ket{22211} + \omega^2\ket{20212} + \ket{02020} + \omega^2\ket{00021}\\
&+& \omega\ket{01022} + \omega^2\ket{12120} + \omega^2\ket{10121} + \omega^2\ket{11122}\\
&+& \omega\ket{22220} + \omega^2\ket{20221} + \ket{21222} + \ket{00101}\\
&+& \omega\ket{01102} + \omega^2\ket{02100} + \omega\ket{10201} + \ket{11202}\\
&+& \omega^2\ket{12200} + \omega^2\ket{20001} + \omega^2\ket{21002} + \omega^2\ket{22000}\\
&+& \ket{00202} + \omega^2\ket{01200} + \omega\ket{02201} + \omega^2\ket{10002}\\
&+& \omega^2\ket{11000} + \omega^2\ket{12001} + \omega\ket{20102} + \omega^2\ket{21100}\\
&+& \ket{22101} + \omega\ket{01111} + \ket{02112} + \omega^2\ket{00110}\\
&+& \ket{11211} + \ket{12212} + \ket{10210} + \omega^2\ket{21011}\\
&+& \ket{22012} + \omega\ket{20010} + \omega^2\ket{01212} + \omega^2\ket{02210}\\
&+& \omega^2\ket{00211} + \omega^2\ket{11012} + \ket{12010} + \omega\ket{10011}\\
&+& \omega^2\ket{21112} + \omega\ket{22110} + \ket{20111} + \omega^2\ket{02121}\\
&+& \omega^2\ket{00122} + \omega^2\ket{01120} + \omega^2\ket{12221} + \ket{10222}\\
&+& \omega\ket{11220} + \omega^2\ket{22021} + \omega\ket{20022} + \ket{21020}\\
&+& \omega\ket{02222} + \omega^2\ket{00220} + \ket{01221} + \omega^2\ket{12022}\\
&+& \omega\ket{10020} + \ket{11021} + \ket{22122} + \ket{20120}\\
&+& \ket{21121}
\end{eqnarray*}

\begin{eqnarray*}
\ket{1_L} &=& \ket{11111} + \omega^2\ket{12112} + \omega\ket{10110} + \omega^2\ket{21211}\\
&+& \omega^2\ket{22212} + \omega^2\ket{20210} + \omega\ket{01011} + \omega^2\ket{02012}\\
&+& \ket{00010} + \omega^2\ket{12121} + \omega^2\ket{10122} + \omega^2\ket{11120}\\
&+& \omega^2\ket{22221} + \ket{20222} + \omega\ket{21220} + \omega^2\ket{02021}\\
&+& \omega\ket{00022} + \ket{01020} + \omega\ket{10101} + \omega^2\ket{11102}\\
&+& \ket{12100} + \omega^2\ket{20201} + \omega\ket{21202} + \ket{22200}\\
&+& \ket{00001} + \ket{01002} + \ket{02000} + \omega^2\ket{11212}\\
&+& \omega^2\ket{12210} + \omega^2\ket{10211} + \omega^2\ket{21012} + \ket{22010}\\
&+& \omega\ket{20011} + \omega^2\ket{01112} + \omega\ket{02110} + \ket{00111}\\
&+& \omega\ket{11010} + \omega^2\ket{12011} + \ket{10012} + \omega^2\ket{21110}\\
&+& \omega\ket{22111} + \ket{20112} + \ket{01210} + \ket{02211}\\
&+& \ket{00212} + \omega^2\ket{12222} + \ket{10220} + \omega\ket{11221}\\
&+& \ket{22022} + \omega^2\ket{20020} + \omega\ket{21021} + \omega\ket{02122}\\
&+& \omega\ket{00120} + \omega\ket{01121} + \omega^2\ket{12020} + \omega\ket{10021}\\
&+& \ket{11022} + \omega\ket{22120} + \omega\ket{20121} + \ket{21122}\\
&+& \ket{02220} + \omega\ket{00221} + \omega^2\ket{01222} + \omega^2\ket{10202}\\
&+& \omega\ket{11200} + \ket{12201} + \omega\ket{20002} + \omega\ket{21000}\\
&+& \omega\ket{22001} + \ket{00102} + \omega\ket{01100} + \omega^2\ket{02101}\\
&+& \ket{10000} + \ket{11001} + \ket{12002} + \ket{20100}\\
&+& \omega\ket{21101} + \omega^2\ket{22102} + \ket{00200} + \omega^2\ket{01201}\\
&+& \omega\ket{02202}
\end{eqnarray*}

\begin{eqnarray*}
\ket{2_L} &=& \ket{22222} + \omega\ket{20220} + \omega^2\ket{21221} + \omega\ket{02022}\\
&+& \ket{00020} + \omega^2\ket{01021} + \omega^2\ket{12122} + \omega^2\ket{10120}\\
&+& \omega^2\ket{11121} + \omega\ket{20202} + \ket{21200} + \omega^2\ket{22201}\\
&+& \ket{00002} + \ket{01000} + \ket{02001} + \omega^2\ket{10102}\\
&+& \ket{11100} + \omega\ket{12101} + \omega^2\ket{21212} + \omega^2\ket{22210}\\
&+& \omega^2\ket{20211} + \omega^2\ket{01012} + \ket{02010} + \omega\ket{00011}\\
&+& \omega^2\ket{11112} + \omega\ket{12110} + \ket{10111} + \omega\ket{22020}\\
&+& \ket{20021} + \omega^2\ket{21022} + \ket{02120} + \ket{00121}\\
&+& \ket{01122} + \omega^2\ket{12220} + \ket{10221} + \omega\ket{11222}\\
&+& \omega^2\ket{22121} + \omega^2\ket{20122} + \omega^2\ket{21120} + \omega^2\ket{02221}\\
&+& \ket{00222} + \omega\ket{01220} + \omega^2\ket{12021} + \omega\ket{10022}\\
&+& \ket{11020} + \ket{20000} + \ket{21001} + \ket{22002}\\
&+& \ket{00100} + \omega\ket{01101} + \omega^2\ket{02102} + \ket{10200}\\
&+& \omega^2\ket{11201} + \omega\ket{12202} + \omega^2\ket{20101} + \ket{21102}\\
&+& \omega\ket{22100} + \ket{00201} + \omega^2\ket{01202} + \omega\ket{02200}\\
&+& \omega\ket{10001} + \omega\ket{11002} + \omega\ket{12000} + \omega^2\ket{21010}\\
&+& \ket{22011} + \omega\ket{20012} + \ket{01110} + \omega^2\ket{02111}\\
&+& \omega\ket{00112} + \omega\ket{11210} + \omega\ket{12211} + \omega\ket{10212}\\
&+& \omega^2\ket{21111} + \omega\ket{22112} + \ket{20110} + \omega\ket{01211}\\
&+& \omega\ket{02212} + \omega\ket{00210} + \ket{11011} + \omega\ket{12012}\\
&+& \omega^2\ket{10010}
\end{eqnarray*}

\ifCLASSOPTIONcaptionsoff
  \newpage
\fi



\bibliographystyle{IEEEtran}
\bibliography{bare_jrnl}
%

%








\end{document}